\def\lengtharrow{30mm}
\def\lengtharrowdouble{39mm}
\newcommand{\boitegauche}[1]{\begin{minipage}[t][][t]{45mm}\flushright#1~\end{minipage}}
\newcommand{\boitedroite}[1]{\begin{minipage}[t][][t]{45mm}\flushleft#1~\end{minipage}}
\newcommand{\boitecentre}[1]{\begin{minipage}[t][][t]{\lengtharrow}{\hbox to \lengtharrow{\hfill#1\hfill}}\end{minipage}}
\newcommand{\flechegauche}[1]{$\stackrel{\text{#1}}{\hbox to \lengtharrow{\leftarrowfill}}$}
\newcommand{\flechedroite}[1]{$\stackrel{\text{#1}}{\hbox to \lengtharrow{\rightarrowfill}}$}
\newcommand{\flechedouble}[1]{$\stackrel{\text{#1}}{\xleftrightarrow{\hbox to \lengtharrowdouble{}}}$}
\newcommand{\pslow}{\hspace{-0mm}\left.}
\newcommand{\pfast}{\hspace{-4pt}\left[}
\newcommand{\pinit}{\hspace{-0mm}\left.}
\newcommand{\pverif}{\hspace{-0mm}\left.}
\newcommand{\interphase}{\vspace{-1mm}}
\def\rowh{.8} 
\newcommand\inR{\in_\text{\sf R}}
\newcommand\tmax{t_\text{max}}
\begin{document}
	\mainmatter              
	%

\title{Complexity of distance fraud attacks in graph-based distance bounding}
					%
	%
	\author{Rolando Trujillo-Rasua}
	%
	%
	%
	\institute{University of Luxembourg, SnT\\
		\email{rolando.trujillo@uni.lu}}
	
	\maketitle              
	
	\begin{abstract}        
		\emph{Distance bounding} (DB) emerged as a countermeasure to the 
		so-called \emph{relay attack}, which affects several technologies such 
		as RFID, NFC, Bluetooth, and Ad-hoc networks. A prominent family of DB 
		protocols are those based on graphs, which were introduced in 2010 to 
		resist both mafia and distance frauds. The security analysis in terms 
		of distance fraud is performed by considering an adversary that, given 
		a vertex labeled graph $G = (V, E)$ and a vertex $v \in V$, is able to 
		find the most frequent $n$-long sequence  in $G$ starting from $v$ 
		(MFS 
		problem). However, to the best of our knowledge, it is still an 
		open question whether the distance fraud security can be computed 
		considering the aforementioned adversarial model. Our first 
		contribution is a 
		proof that the MFS problem is NP-Hard even when the graph is 
		constrained to meet the requirements of a graph-based DB protocol. 
		Although this result does not invalidate the model, it does 
		suggest that a \emph{too-strong} adversary is 
		perhaps being considered (\emph{i.e.}, in practice, graph-based DB 
		protocols 
		might resist distance fraud better than the security 
		model suggests.) Our second contribution is an algorithm addressing 
		the distance fraud security of the tree-based approach due to Avoine 
		and Tchamkerten. The novel algorithm improves the computational 
		complexity 
		$O(2^{2^n+n})$ of the naive approach to $O(2^{2n}n)$ where $n$ is the number of rounds.
		\keywords {security, relay attack, distance bounding, most frequent 
		sequence, graph, NP-complete, NP-hard}
	\end{abstract}

	\section{Introduction}

	Let us consider a little girl willing to compete with two chess 
	grandmasters, say Fischer and Spassky. She agrees with both on playing by 
	post and manages to use opposite-colored pieces in the games. 
	Once the 
	little girl receives Fisher's move she simply forwards it to Spassky and 
	vice versa. As a result, she wins one game or draws both even though she 
	might know nothing about chess. This problem, known as the \emph{chess 
	grandmaster problem}, was introduced by Conway in 
	1976~\cite{citeulike:1223195} and informally describes how relay attacks 
	work. 
	
	In a relay attack, an adversary acts as a \emph{passive} man-in-the-middle attacker relaying messages between the prover and the verifier during an authentication protocol. In case the adversary is \emph{active}, the attack is known as \emph{mafia fraud}~\cite{Desmedt:1987:SUS:646752.704723} and succeeds if the prover and the verifier complete the authentication protocol without noticing the presence of the adversary.
	
	With the widespread deployment of contactless technologies in recent 
	years, mafia fraud has re-emerged as a serious security threat for 
	authentication schemes. Radio Frequency IDentification (RFID), Near Field 
	Communication (NFC), and Passive Keyless Entry and Start Systems in Modern 
	Cars, have been proven to be vulnerable to mafia 
	fraud~\cite{Francis:2010:PNP:1926325.1926331,Hancke:2008:ATD:1352533.1352566}.
	 Other contactless technologies such as smartcards and e-voting are also 
	threatened by this attack~\cite{DrimerM-2007-usenix,OrenW-2009-eprint}.
	
	The most promising countermeasure to thwart mafia fraud is \emph{distance 
	bounding} (DB)~\cite{Brands:1994:DP:188307.188361}, that is, an 
	authentication protocol where time-critical sessions allow to compute an 
	upper bound of the distance between the prover and the verifier. However, 
	this type of protocols is vulnerable to another type of fraud, the 
	\emph{distance fraud}~\cite{Brands:1994:DP:188307.188361}. Contrary to 
	mafia fraud, distance fraud is performed by a legitimate prover, who aims 
	to authenticate beyond the expected and allowed distance. 
	
	In 2010, graph-based DB protocols aimed at being resistant to both mafia 
	and distance frauds were 
	introduced~\cite{Trujillo-Rasua:2010:PDP:1926325.1926352}. This type of 
	protocols is flexible in the sense that different graph structures can be 
	used so as to balance memory requirements and security properties. 
	However, neither the graph-based approach 
	in~\cite{DBLP:conf/isw/AvoineT09} nor the one 
	in~\cite{Trujillo-Rasua:2010:PDP:1926325.1926352} have computed their 
	actual distance fraud security. Indeed, this analysis was left as an 
	open problem in~\cite{Trujillo-Rasua:2010:PDP:1926325.1926352}.

	\noindent\textbf{Contributions.} 	In this article we address the open 
	problem of computing the distance fraud resistance of graph-based DB 
	protocols. We first reformulate the security model provided 
	in~\cite{Trujillo-Rasua:2010:PDP:1926325.1926352} and define it in terms 
	of, to the best of our knowledge, two new problems in Graph Theory. 
	\emph{The Most Frequent 
		Sequence problem} (MFS problem) and its simplified version \emph{the 
		Binary Most Frequent Sequence problem} (Binary 
	MFS problem).
	
	We then provide a polynomial-time reduction of the Satisfiability problem 
	(SAT) to the Binary MFS problem, proving that both the Binary MFS and the 
	MFS problems are NP-Hard. This result suggests that a \emph{too-strong} 
	adversary is perhaps being considered by the security model, unless $P = 
	NP$. However, the 
	implications of our reduction goes beyond that. 
	It also provides a clue of how to design graph-based DB protocols 
	resistant to distance fraud. 
	
	Our next contribution is a novel algorithm to compute the distance fraud 
	resistance of the tree-based DB protocol proposed by Avoine and 
	Tchamkerten~\cite{DBLP:conf/isw/AvoineT09}. Our algorithm significantly 
	reduces the time complexity of the naive approach from $O(2^{2^n+n})$ to 
	$O(2^{2n}n)$ where $n$ is the number of rounds. This paves the way for a fair 
	comparison of graph-based proposals with other state-of-the-art DB protocols.
	
	\noindent\textbf{Organization.}	The rest of this article is organized as follows. 
	Section~\ref{sec_preliminaries} introduces graph-based DB 
	protocols and the new problems Binary MFS and MFS. Related 
	works close to the MFS problem are reviewed in 
	Section~\ref{sec_preliminaries} as well. Section~\ref{sec_np} contains proofs on the 
	hardness of the Binary MFS problem. The algorithm for computing the distance 
	fraud resistance of the tree-based DB protocol is described and 
	analyzed in Section~\ref{sec_algorithm}. The discussion and conclusions are drawn in Section~\ref{sec_conclusions}.
	
	\section{Preliminaries}\label{sec_preliminaries}
	
%
	
	\subsection{Graph-based distance bounding protocols}
	
	Graph-based DB protocols were introduced 
	in~\cite{Trujillo-Rasua:2010:PDP:1926325.1926352} aimed at resisting both 
	mafia and distance frauds, yet requiring low memory to be implemented. The 
	idea is to define a 
	digraph $G = (V, E)$ and a starting vertex $v \in V$. Then, a 
	challenge-response protocol (see Figure~\ref{fig_graph_protocol}) is executed 
	where the challenges define a walk in $G$ according to an edge labeling 
	function $\ell_E: E \rightarrow \{0, 1\}$ and the 
	responses are stored on the vertices according to a 
	vertex labeling function $\ell_V: V \rightarrow \{0, 1\}$.
	
	{\renewcommand{\arraystretch}{\rowh}
		\begin{algorithm}[!htb]
			\begin{center}
				\[\pinit%
				\begin{tabular}{ccc}
				\boitegauche{\textbf{Verifier}} & \boitecentre{} & \boitedroite{\textbf{Prover}} \\
				\boitegauche{(Secret $x$)} && \boitedroite{(Secret $x$)} \\
				\boitegauche{(Digraph $G = (V, E)$)} && \boitedroite{(Digraph $G = (V, E)$)} \\
				\boitegauche{(Starting vertex $v \in V$)} && \boitedroite{(Starting vertex $v \in V$)} \\
				\end{tabular}
				\right.\]
				\interphase
				\[\pslow%
				\begin{tabular}{rcl}
				\boitegauche{Pick a random $N_V$} & &  \boitedroite{Pick a random $N_P$}\\
				\boitegauche{}& \flechedroite{$N_V$} & \boitedroite{}\\
				\boitegauche{}& \flechegauche{$N_P$} & \boitedroite{}\\
				\boitegauche{
					\begin{tabular}{l}
					On input $PRF(K,N_{P},N_{V})$\\
					Create $\ell_V$ and $\ell_E$\\
					$v_0' = v$\\
					\end{tabular}} &
				& \boitedroite{%
					\begin{tabular}{l}
					On input $PRF(K,N_{P},N_{V})$\\
					Create $\ell_V$ and $\ell_E$\\
					$v_0 = v$\\
					\end{tabular}}
				\end{tabular}
				\right.\]
				\interphase
				\[\pfast
				\begin{tabular}{rcl}
				\boitegauche{}& \textbf{for }$i = 1$ \textbf{ to } $n$& \boitedroite{}\\
				\boitegauche{Pick $c_i \inR \{0,1\}$} & \boitecentre{}&\boitedroite{}\\
				\boitegauche{Start Timer} & \flechedroite{$c_i$} &\\
				& & \boitedroite{Look for $v_i$ such that }\\
				\boitegauche{Stop Timer} & \flechegauche{$\ell_V(v_i)$} &\boitedroite{$\ell_E(v_{i-1}, v_i) = c_i$}\\
				\boitegauche{Look for $v_i'$ such that $\ell_E(v_{i-1}', v_i') = c_i$}& & \\
				\end{tabular}
				\right.\]
				\interphase
				\[\pverif
				\begin{tabular}{lcc}
				\boitegauche{Check that $\Delta t_i \leq \tmax$ and $\ell_V(v_i') = 
					\ell_V(v_i)$}&\boitecentre{}&\boitedroite{}\\
				\end{tabular}
				\right.\]
			\end{center}
			\caption{Graph-based distance bounding protocol}\label{fig_graph_protocol}
		\end{algorithm}}

		As shown by Figure~\ref{fig_graph_protocol}, prover and verifier exchange two 
		nonces and use a pseudo-random function ($PRF(.)$) with a shared 
		private 
		key to compute two labeling functions $\ell_V: V \rightarrow \sum$ and 
		$\ell_E: E \rightarrow \sum$ where $\sum = \{0, 1\}$. For 
		$\ell_E$ it must hold that $\ell_E(u, v) \neq \ell_E(u, w)$ for every 
		pair of different
		edges $(u, v)$ and $(u, w)$ in $E$. By contrast, $\ell_V$ is chosen randomly. 
		After labeling the graph, $n$ rounds of time-critical sessions are executed. 
		At the $i$th round, the verifier sends the binary challenge $c_i$. 
		Then, the prover 
		answers with $\ell_V(v_i)$ where $v_i$ holds that $(v_{i-1}, v_i) \in E$ and 
		$\ell_E(v_{i-1}, v_i) = c_i$. Note that $v_0$ is the starting vertex 
		$v$. At 
		the end of the $n$ time-critical sessions, the verifier checks all prover's 
		responses ($\ell_V(v_i)$) and the round-trip-times ($\Delta t_i$), 
		which should be below some threshold $\tmax$. Intuitively, the lower $\tmax$ 
		the closer the prover to the verifier is expected to be. 	
		
		Two graph-based DB protocols exist; the tree-based approach~\cite{DBLP:conf/isw/AvoineT09} and the Poulidor protocol~\cite{Trujillo-Rasua:2010:PDP:1926325.1926352}. As suggested by its name, the former uses a tree of depth $n$ and $2^{n+1}-1$ nodes (see Figure~\ref{fig_tree}). By contrast, Poulidor uses a graph structure with $2n$ nodes only in order to reduce memory requirements (see Figure~\ref{fig_poulidor}). Both have proven to resist mafia fraud better than other DB protocols such as~\cite{Hancke:2005:RDB:1128018.1128472,DBLP:conf/cans/KimA09}. Its resistance to distance fraud, however, is still an open problem. 
		
		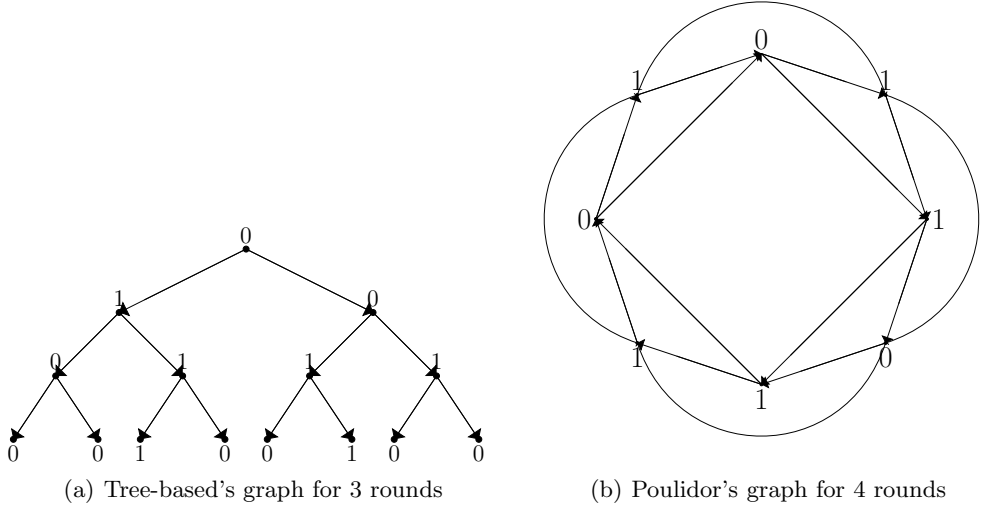
\begin{figure}
			\centering
			\subfigure[Tree-based's graph for $3$ rounds]{
				\scalebox{0.75}{\begin{tikzpicture} [scale = 1.5]

        \filldraw [black] (0,1.5) circle (1pt)  node[anchor = south] {\large{$0$}}; 

        \filldraw [black] (-1.5,0.75) circle (1pt) node[anchor = south] {\large{$1$}}; 
        \filldraw [black] (1.5,0.75) circle (1pt) node[anchor = south] {\large{$0$}}; 

        \draw[-triangle 90,postaction={draw, shorten >=0.5cm, -}] (0,1.5) -- (-1.5,0.75) ; 
        \draw[-triangle 90,postaction={draw, shorten >=1mm, -}] (0,1.5) -- (1.5,0.75) ;

        \filldraw [black] (-2.25,0) circle (1pt) node[anchor = south] {\large{$0$}}; 
        \filldraw [black] (-0.75,0) circle (1pt) node[anchor = south] {\large{$1$}}; 
        \filldraw [black] (0.75,0) circle (1pt) node[anchor = south] {\large{$1$}}; 
        \filldraw [black] (2.25,0) circle (1pt) node[anchor = south] {\large{$1$}}; 

        \draw[-triangle 90,postaction={draw, shorten >=1mm, -}] (-1.5,0.75) -- (-2.25,0);
       \draw[-triangle 90,postaction={draw, shorten >=1mm, -}] (-1.5,0.75) -- (-0.75,0);
        \draw[-triangle 90,postaction={draw, shorten >=1mm, -}] (1.5,0.75) -- (0.75,0);
        \draw[-triangle 90,postaction={draw, shorten >=1mm, -}] (1.5,0.75) -- (2.25,0);

        \filldraw [black] (-2.75, -0.75) circle (1pt) node[anchor = north] {\large{$0$}}; 
        \filldraw [black] (-1.75,-0.75) circle (1pt) node[anchor = north] {\large{$0$}}; 
        \filldraw [black] (-1.25, -0.75) circle (1pt) node[anchor = north] {\large{$1$}}; 
        \filldraw [black] (-0.25,-0.75) circle (1pt) node[anchor = north] {\large{$0$}}; 
        \filldraw [black] (0.25, -0.75) circle (1pt) node[anchor = north] {\large{$0$}}; 
        \filldraw [black] (1.25,-0.75) circle (1pt) node[anchor = north] {\large{$1$}}; 
        \filldraw [black] (1.75, -0.75) circle (1pt) node[anchor = north] {\large{$0$}}; 
        \filldraw [black] (2.75,-0.75) circle (1pt) node[anchor = north] {\large{$0$}}; 

        \draw[-triangle 90,postaction={draw, shorten >=1mm, -}] (-2.25,0) -- (-2.75, -0.75);
       \draw[-triangle 90,postaction={draw, shorten >=1mm, -}] (-2.25,0) -- (-1.75,-0.75);
       \draw[-triangle 90,postaction={draw, shorten >=1mm, -}] (-0.75,0) -- (-1.25, -0.75);
        \draw[-triangle 90,postaction={draw, shorten >=1mm, -}] (-0.75,0) -- (-0.25,-0.75);
        \draw[-triangle 90,postaction={draw, shorten >=1mm, -}] (0.75,0) -- (0.25, -0.75);
       \draw[-triangle 90,postaction={draw, shorten >=1mm, -}] (0.75,0) -- (1.25,-0.75);
        \draw[-triangle 90,postaction={draw, shorten >=1mm, -}] (2.25,0) --  (1.75, -0.75);
        \draw[-triangle 90,postaction={draw, shorten >=1mm, -}] (2.25,0) -- (2.75,-0.75);

\end{tikzpicture}}
				\label{fig_tree}
			}
			\subfigure[Poulidor's graph for $4$ rounds]{
				\scalebox{0.55}{\begin{tikzpicture}

\filldraw [black] (0,4) circle (0.1pt) node[anchor = south] {\huge{$0$}};
\filldraw [black] (0,-4) circle (1pt) node[anchor = north] {\huge{$1$}};
\filldraw [black] (4,0) circle (1pt) node[anchor = west] {\huge{$1$}};
\filldraw [black] (-4,0) circle (1pt) node[anchor = east] {\huge{$0$}};
\filldraw [black] (3,3) circle (1pt) node[anchor = south] {\huge{$1$}};
\filldraw [black] (3,-3) circle (1pt) node[anchor = north] {\huge{$0$}};
\filldraw [black] (-3,3) circle (1pt) node[anchor = south] {\huge{$1$}};
\filldraw [black] (-3,-3) circle (1pt) node[anchor = north] {\huge{$1$}};

\draw[-triangle 90,postaction={draw, shorten >=1cm, -}] (0, 4)  --  (4, 0) ;
\draw[-triangle 90,postaction={draw, shorten >=1cm, -}]  (0, 4)  -- (3, 3) ;

\draw[-triangle 90,postaction={draw, shorten >=1cm, -}] (3, 3)  --  (4, 0) ;
\draw[draw,-to] (3,3) .. controls (6,2) and (6,-2)  .. (3,-3) ;

\draw[-triangle 90,postaction={draw, shorten >=1cm, -}]  (4, 0)  --   (0, -4) ;
\draw[-triangle 90,postaction={draw, shorten >=1cm, -}]  (4, 0)  --  (3, -3) ;

\draw[-triangle 90,postaction={draw, shorten >=1cm, -}]  (3, -3)  --  (0, -4) ;
\draw[draw,-to]  (3,-3) .. controls (2,-6) and (-2,-6)  .. (-3,-3) ;

\draw[-triangle 90,postaction={draw, shorten >=1cm, -}]   (0, -4)  --   (-4, 0) ;
\draw[-triangle 90,postaction={draw, shorten >=1cm, -}] (0, -4)  --   (-3, -3) ;

\draw[-triangle 90,postaction={draw, shorten >=1cm, -}] (-3, -3)  --  (-4, 0) ;
\draw[draw,-to]  (-3,-3) .. controls (-6,-2) and (-6,2)  .. (-3,3) ;

\draw[-triangle 90,postaction={draw, shorten >=1cm, -}]  (-4, 0) --  (0, 4) ;
\draw[-triangle 90,postaction={draw, shorten >=1cm, -}]  (-4, 0)  --  (-3, 3) ;

\draw[-triangle 90,postaction={draw, shorten >=1cm, -}] (-3, 3)  --  (0, 4) ;
\draw[draw,-to]  (-3,3) .. controls (-2,6) and (2,6)  .. (3,3) ;

\end{tikzpicture}}
				\label{fig_poulidor}
			}
			\caption{Graph structures used by the tree-based and Poulidor approaches. Vertices of both graphs have been randomly labeled.}
		\end{figure}
	
		There exist other DB protocols, more computationally demanding, based on signatures and/or a final extra slow phase~\cite{Brands:1994:DP:188307.188361,SingeleeP-2007-esas}. Others simply could be plugged into most DB protocols such as~\cite{Munilla:2008:DBP:1455665.1455675,XinYTHC-2011-imccc}. The interested reader could refer to~\cite{AvoineBKLM-2011-jcs} for more details.

		\subsection{Distance fraud security analysis}

		
		The security analysis in terms of distance fraud is usually performed within a well-known framework proposed by Avoine et al.~\cite{AvoineBKLM-2011-jcs}. In this framework, a distance fraud adversary uses the \emph{early-reply strategy} to defeat the DB protocol. This strategy consists on sending the bits answer in advance (\emph{i.e.,} before receiving the challenges.) Doing so, the adversary simulates to be closer than really is, and its success probability is lower-bounded by $1/2^n$.
		
		In~\cite{Trujillo-Rasua:2010:PDP:1926325.1926352}, the best early-reply strategy against what they called a \emph{family} of DB protocols is defined. This \emph{family} includes graph-based DB protocols. However, their definition is too generic to be used for simply analyzing graph-based DB protocols. Therefore, we reformulate it here in terms of a new problem in Graph Theory. The problem is named Binary MFS problem (See Definition~\ref{def_binary_mfs_problem}) and is based on its more general version MFS problem (see Definition~\ref{def_mfs_problem}). 
		
		\begin{definition}[The most frequent sequence problem (MFS problem)]\label{def_mfs_problem}
			Let $G = (V, E)$ be a vertex-labeled digraph where $\Sigma$ and $\ell: V 
			\rightarrow \Sigma$ are the set of vertex labels and the labeling function 
			respectively. For a label sequence $t = t_1t_2...t_k$, $occ_{ v}^G(t)$ denotes 
			the number of walks $v_{1}v_{2}...v_{k}$ in $G$ such that $v_{1} = v$ and 
			$\forall i \in \{1, ..., k\} \ (\ell(v_{i}) = t_i)$. The MFS 
			problem 
			consists on 
			finding, given the triple $(G, v, k)$, the \emph{most frequent sequence} of 
			size $k$ defined as $\arg\max_{t\in \Sigma^{k}}(occ_{ v}^G(t))$.
		\end{definition}
		
		\begin{definition}[Binary MFS problem]\label{def_binary_mfs_problem}
			The Binary MFS problem is an MFS problem  where $G$ is constrained to 
			use a binary vertex set ($\sum = \{0, 1\}$) and the out-degree of every vertex 
			should be at most $2$.
		\end{definition}
		
		\noindent \textbf{Example.} Either the graph in Figure~\ref{fig_tree} or the one in Figure~\ref{fig_poulidor} can be the input of the Binary MFS problem. Assuming $k = 4$ and the starting vertex as the top one, the most frequent sequences for the tree in Figure~\ref{fig_tree} is $0010$ (occurs $3$ times), and for the graph in Figure~\ref{fig_poulidor} is $0101$ (occurs $4$ times). 
		
		To successfully apply a distance fraud attack against a graph-based DB 
		protocol with $n$ time-critical sessions, the best adversary's 
		strategy consists of: i) solving the 
		Binary MFS problem defined by the triple $(G, v, n+1)$ and finding the most frequent sequence $t_0t_1...t_n$, ii) sending $t_1...t_n$ in advance to the verifier as the 
		responses to the $n$ verifier's challenges. By this strategy, the 
		adversary's success probability is maximized to $\frac{occ_{ 
		v}^G(t)}{2^n}$~\cite{Trujillo-Rasua:2010:PDP:1926325.1926352}. Coming 
		back to the previous example, the adversary success probability of the 
		tree-based approach defined by Figure~\ref{fig_tree} is $3/8$, which 
		is higher than the expected lower bound $1/8$.

		\begin{definition}[Distance fraud success probability]\label{def_distance_fraud}
			Let $\prod$ be a graph-based DB protocol with $n$ time-critical 
			sessions that uses the vertex-labeled digraph $G = (V, E)$ and $v \in V$ as 
			the starting vertex. Let $M_{G, v, n}$ be a random variable on the sample 
			space of all labeling functions $\ell: V \rightarrow \Sigma$ that outputs the 
			maximum value $\max(occ_{ v}^G(t))$ where $t \in \{0,1\}^{n+1}$. The 
			\emph{distance fraud success probability} of an adversary against $\prod$ is 
			defined as $\frac{E(M_{G, v, n})}{2^n}$ where $E(M_{G, v, n})$ represents the 
			expectation of the random variable $M_{G, v, n}$.
		\end{definition}

		 Note that, following the design of graph-based DB protocols, Definition~\ref{def_distance_fraud} considers that $G$ is randomly labeled at each execution of the protocol. 
		 
		 To the best of our knowledge, computing distance fraud 
		security according to Definition~\ref{def_distance_fraud} has been only addressed in 
		its seminal work~\cite{Trujillo-Rasua:2010:PDP:1926325.1926352}. Apparently, 
		the problem is one of those problems that remain intractable even if $P = NP$ 
		because all, or almost all, the labeling functions should be considered. For 
		this reason, an upper bound was proposed 
		in~\cite{Trujillo-Rasua:2010:PDP:1926325.1926352} and the exact distance fraud 
		security was left as an open problem. We have shown that this problem depends 
		on a problem named the MFS problem and, in particular, on the Binary MFS 
		problem. Below, we review some work related to them.
		
		\subsection{Review on frequent sequences problems}
		
		Sequential Pattern Mining is a well-studied field introduced by Agrawal and 
		Srikant~\cite{DBLP:conf/icde/AgrawalS95} in 1995. Given a databases of 
		transactions (\emph{e.g.}, customer transactions, medical records, web sessions, 
		etc.) the problem consists on discovering all the sequential patterns with 
		some minimum support. The support of a pattern is defined as the number of 
		data-sequence within the database that are contained in the pattern. 
		
		The sequential pattern mining problem is $\#P$-complete~\cite{Yang:2004:CMM:1014052.1014091} and several variants of it exist. For instance, Mannila et al. say that two events are connected if they are close enough in terms of time~\cite{Mannila:1997:DFE:593416.593449}. They define an \emph{episode} as a collection of connected events and the problem is to find frequently occurring episodes in a sequence. A simpler variant, known as the \emph{most common subsequent problem}, was introduced by Campagna and Pagh~\cite{Campagna:2010:FFP:1933307.1934549}. The most common subsequent problem does not consider time-stamped events. Instead, it aims to find all the label sequences in a vertex-labeled acyclic graph that appear more often. Other variants have arisen from complex applications namely, telecommunication, market analysis, and DNA research. We refer the reader to~\cite{mining_survey} for an extensive survey on this subject.
		
		Frequent paths on a graph have also been used to define Kernel functions~\cite{enumerating,Akutsu20121416}. Kernel functions has applicability in chemoinformatics and bioinformatics where objects are mapped to a feature space. In this case, the feature space representation is the number of occurrences of vertex-labeled paths and the problem is to infer the graph from such a feature vector. This problem has been proven to be NP-Hard even for trees of bounded degree~\cite{Akutsu20121416}.
		
		It can be seen that the MFS problem is different to the sequential pattern mining problems and its nature is obviously different to the one of Kernel methods. On one hand, sequential pattern mining is an enumeration problem while MFS is just a search problem. On the other hand, the MFS problem requires all walks to begin from a given vertex and the size of the sequences should be equal. As in~\cite{Campagna:2010:FFP:1933307.1934549}, the time dimension is not considered.
		
	\section{On the hardness of the Binary MFS problem}\label{sec_np}

	Binary MFS is a search problem that looks for the most frequent sequence of 
	length $k$ in a vertex-labeled digraph $G$ starting from a given vertex $v$ 
	(see 
	Definition~\ref{def_binary_mfs_problem}). Intuitively, all or almost all the 
	walks in $G$ starting from $v$ should be analyzed in order to find such a 
	sequence, which means that Binary MFS might not be in the complexity class 
	$P$. However, we cannot even state that Binary MFS is in $NP-P$ since it is 
	not trivial how to \emph{check} a solution in polynomial time. Nevertheless, 
	we prove in this Section (see Theorem~\ref{theo_np}) that the general 
	Boolean 
	Satisfiability problem (SAT) reduces to Binary MFS. Therefore, Binary MFS can 
	be considered NP-Hard even thought it may not even be in 
	$NP$~\cite{Garey:1979:CIG:578533}.
	
	\begin{definition}[SAT]\label{def_sat}
		Let $x = (x_1, x_2, ..., x_n)$ be a set of boolean variables and $c_1(x), 
		c_2(x), ..., c_m(x)$ be a set of clauses where $c_i(x)$ is a disjunction of 
		literals. The Boolean satisfiability problem (SAT) consists on deciding whether there exists an 
		assignment for the boolean variables $x$ such that the function $f_{SAT} = 
		c_1(x) \wedge c_2(x) \wedge ... \wedge c_m(x) = 1$.
	\end{definition}
	
	Algorithm~\ref{alg_reduction} shows our reduction from SAT to an instance of 
	the Binary MFS problem. First, it creates a binary tree $T$ of depth $\lceil 
	\log m \rceil$ with $m$ leafs $c_1$, $c_2$, $...$, $c_m$\footnote{The leafs 
		are intentionally labeled by using the same clause names.}, and a graph $G' = 
	(V', E')$ where $V' = \{u_0^2, v_0^2, ..., u_0^n, v_0^n\}$ 
	and $E' = \{(x, y) | \exists k \in \{2, ..., n-1\} (x \in \{u_0^k, v_0^k\} 
	\wedge y \in \{u_0^{k+1}, v_0^{k+1}\})\}$. The graph $G = (V, E)$ is 
	initialized with the two connected components $T$ and $G'$. In addition, 
	$V$ is increased with the vertices $u_i^j$ and $v_i^j$ where $i \in \{1, 
	..., m\}$ and $j \in \{1, ...,n\}$. Then, for each clause $c_i$ and each 
	variable $x_j$ ($j < n$), the vertex $u_i^j$ is connected with $u_0^{j+1}$ 
	and $v_0^{j+1}$ if $x_j \in c_i(x)$, with $u_i^{j+1}$ and $v_i^{j+1}$ 
	otherwise. Similarly, the vertex $v_i^j$ is connected with $u_0^{j+1}$ and 
	$v_0^{j+1}$ if $\neg x_j \in c_i(x)$, with $u_i^{j+1}$ and $v_i^{j+1}$ 
	otherwise. Finally, for every $i \in \{1, ..., m\}$: i) the vertices 
	$u_i^n$ and $v_i^{n}$ are removed together with their incident edges, ii) 
	the edges $(u_{i}^{n-1}, u_{0}^{n})$ and $(v_{i}^{n-1}, u_{0}^{n})$ are 
	added if $x_n \in c_i(x)$, iii) if $\neg x_n \in c_i(x)$ the added edges 
	are $(u_{i}^{n-1}, v_{0}^{n})$ and $(v_{i}^{n-1}, v_{0}^{n})$. The 
	vertex-label function is simply defined as a function that outputs $0$ on input $v_i^j$ for every $i \in \{0, 1, ..., m\}$ and $j \in \{1, ..., n\}$, 
	outputs $1$ otherwise.

		\begin{algorithm}
			\caption{Reduction from the SAT problem} \label{alg_reduction}
			\begin{algorithmic}[1]
				\Require A SAT instance where $x = (x_1, x_2, ..., x_n)$ are the boolean 
				variables and $c_1(x), c_2(x), ..., c_m(x)$ are the set of clauses 
				\State Let $G = (V, E)$ be a digraph with just one vertex named
				\emph{root}
				\State From the root, a directed binary tree with $m$ leafs is created 
				such that all the leafs are at the same depth. The leaf vertices are 
				denoted as $c_1, c_2, ..., c_m$
				\State Let $G' = (V', E')$ be a digraph where $V' = \{u_0^2, v_0^2, ..., u_0^n, v_0^n\}$ 
				and $E' = \{(x, y) | \exists k \in \{2, ..., n-1\} (x \in \{u_0^k, v_0^k\} 
				\wedge y \in \{u_0^{k+1}, v_0^{k+1}\})\}$
				\State Set $G = G \cup G'$
				\ForAll {vertex $c_i$} 
				\State Set $V = V \cup \{u_i^1, v_i^1, u_i^{2}, v_i^{2}, ..., 
				u_i^n, v_i^n\}$
				\State Set $E = E \cup \{(c_i, u_i^1), (c_i, v_i^1)\}$
				\ForAll {$j \in \{1, 2, ..., n-1\}$}
				\If {$x_{j} \in c_i(x)$} \label{step3}
				\State Set $E = E \cup \{(u_{i}^j, u_{0}^{j+1}), (u_{i}^j, 
				v_{0}^{j+1})\}$ and $E = E \cup \{(v_{i}^j, u_{i}^{j+1}), 
				(v_{i}^{j}, v_{i}^{j+1})\}$
				\ElsIf {$\neg x_{j} \in c_i(x)$} \label{step4}
				\State Set $E = E \cup \{(u_{i}^j, u_{i}^{j+1}), 
				(u_{i}^{j}, v_{i}^{j+1})\}$ and $E = E \cup \{(v_{i}^j, u_{0}^{j+1}), (v_{i}^j, 
				v_{0}^{j+1})\}$
				\Else
				\State Set $E = E \cup \{(u_{i}^j, u_{i}^{j+1}), 
				(u_{i}^{j}, v_{i}^{j+1})\}$ and $E = E \cup \{(v_{i}^j, u_{i}^{j+1}), 
				(v_{i}^{j}, v_{i}^{j+1})\}$
				\EndIf
				\EndFor
				\State Remove $u_i^n$ and $v_i^n$ from $G$ \label{step1} 
				\label{step_remove}
				\If {$x_{n} \in c_i(x)$} \label{step_final1}
				Set $E = E \cup \{(u_{i}^{n-1}, u_{0}^{n}), 
				(v_{i}^{n-1}, u_{0}^{n})\}$ 
				\EndIf
				\If {$\neg x_{n} \in c_i(x)$} \label{step_final2}
				Set $E = E \cup \{(u_{i}^{n-1}, v_{0}^{n}), 
				(v_{i}^{n-1}, v_{0}^{n})\}$ 
				\EndIf
				\EndFor
				\State Create vertex-label function $\ell_V(.)$ such that $\forall i \in \{0, 1, ..., m\}, j \in \{1, ..., n\}(\ell_V(v_i^j) = 0)$, $\ell_V(.)$ outputs $1$ otherwise.		
				\State \textbf{Return} $G$ and $\ell_V$ as result.
			\end{algorithmic}
		\end{algorithm}
	
	To better illustrate Algorithm~\ref{alg_reduction}, 
	Figure~\ref{fig_sat_reduction} shows an example of its output for a given 
	SAT instance. Note that, Algorithm~\ref{alg_reduction} does not consider 
	tautologies such as the empty clause or one containing $x \vee \neg 
	x$.

	\begin{figure}
		\centering
		\begin{tikzpicture}[scale = 0.6]

\fill[black] (0,6) circle (0.05cm) node[anchor = south] { \large{$\text{root}$}};
\fill[black] (-1.7,4) circle (0.05cm);
\fill[black] (1.7,4) circle (0.05cm);

\draw[-to] (0, 6)  --  (-1.7, 4);
\draw[draw,-to] (0, 6)  --  (1.7, 4);

\draw[draw,-to] (-1.7, 4)  --  (-3.5, 2);
\draw[draw,-to] (-1.7, 4)  --  (0, 2);
\draw[draw,-to] (1.7, 4)  --  (3.5, 2);

\fill[black] (-3.5,2) circle (0.05cm) node[anchor = south] {\large{$c_1$}};
\fill[black] (0,2) circle (0.05cm) node[anchor = south] { \large{$c_2$}};
\fill[black] (3.5,2) circle (0.05cm) node[anchor = south] { \large{$c_3$}};

\draw[draw,-to] (-3.5, 2)  --  (-7, 0);
\draw[draw,-to] (-3.5, 2)  --  (-5, 0);

\draw[draw,-to] (0, 2)  --  (-3, 0);
\draw[draw,-to] (0, 2)  --  (-1, 0);

\draw[draw,-to] (3.5, 2)  --  (1, 0);
\draw[draw,-to] (3.5, 2)  --  (3, 0);

\fill[black] (-7,0) circle (0.05cm) node[anchor = east] {\large{$u_1^1$}};
\fill[black] (-5,0) circle (0.05cm) node[anchor = south] {\large{$v_1^1$}};
\fill[black] (-3,0) circle (0.05cm) node[anchor = south] {\large{$u_2^1$}};
\fill[black] (-1,0) circle (0.05cm) node[anchor = south] {\large{$v_2^1$}};
\fill[black] (1,0) circle (0.05cm) node[anchor = south] {\large{$u_3^1$}};
\fill[black] (3,0) circle (0.05cm) node[anchor = south] {\large{$v_3^1$}};

\draw[draw,-to] (-5, 0)  --  (-7, -2);
\draw[draw,-to] (-5, 0)  --  (-5, -2);

\draw[draw,-to] (-3, 0)  --  (-3, -2);
\draw[draw,-to] (-3, 0)  --  (-1, -2);
\draw[draw,-to] (-1, 0)  --  (-3, -2);
\draw[draw,-to] (-1, 0)  --  (-1, -2);
\draw[draw,-to] (1, 0)  --  (1, -2);
\draw[draw,-to] (1, 0)  --  (3, -2);

\fill[black] (-7,-2) circle (0.05cm) node[anchor = east] {\large{$u_1^2$}};
\fill[black] (-5,-2) circle (0.05cm) node[anchor = west] {\large{$v_1^2$}};
\fill[black] (-3,-2) circle (0.05cm) node[anchor = east] {\large{$u_2^2$}};
\fill[black] (-1,-2) circle (0.05cm) node[anchor = west] {\large{$v_2^2$}};
\fill[black] (1,-2) circle (0.05cm) node[anchor = east] {\large{$u_3^2$}};
\fill[black] (3,-2) circle (0.05cm) node[anchor = south] {\large{$v_3^2$}};
\fill[black] (5,-2) circle (0.05cm) node[anchor = east] {\large{$u_0^2$}};
\fill[black] (7,-2) circle (0.05cm) node[anchor = west] {\large{$v_0^2$}};

\draw[draw,-to] (5, -2)  --  (5, -4);
\draw[draw,-to] (5, -2)  --  (7, -4);
\draw[draw,-to] (7, -2)  --  (5, -4);
\draw[draw,-to] (7, -2)  --  (7, -4);
\draw[draw,-to,dashed] (-1, -2)  --  (-1, -4);
\draw[draw,-to] (3, -2)  --  (5, -4);
\draw[draw,-to] (3, -2)  --  (7, -4);

\draw[draw,-to] (1,-2) .. controls (2,-4) and (5,-6) .. (7,-4);

\fill[black] (5,-4) circle (0.05cm) node[anchor = north] {\large{$u_0^3$}};
\fill[black] (7,-4) circle (0.05cm) node[anchor = north] {\large{$v_0^3$}};

\fill[black] (-7,3) circle (0.05cm) node[anchor = south] {\large{$u_0^2$}};
\fill[black] (-5,3) circle (0.05cm) node[anchor = south] {\large{$v_0^2$}};


\fill[black] (-7,-4) circle (0.05cm) node[anchor = north] {\large{$u_0^3$}};
\fill[black] (-5,-4) circle (0.05cm) node[anchor = north] {\large{$v_0^3$}};

\fill[black] (-3,-4) circle (0.05cm) node[anchor = north] {\large{$u_0^3$}};
\fill[black] (-1,-4) circle (0.05cm) node[anchor = north] {\large{$v_0^3$}};

\draw[draw,-to, dashed] (-7, 0)  --  (-7, 3);
\draw[draw,-to, dashed] (-7, 0)  --  (-5, 3);

\draw[draw,-to] (3, 0)  --  (5, -2);
\draw[draw,-to] (3, 0)  --  (7, -2);

\draw[draw,-to, dashed] (-5, -2)  --  (-7, -4);
\draw[draw,-to, dashed] (-5, -2)  --  (-5, -4);

\draw[draw,-to, dashed] (-3, -2)  --  (-3, -4);
\draw[draw,-to, dashed] (-3, -2)  --  (-1, -4);

\end{tikzpicture}
		\caption{\label{fig_sat_reduction}The resulting graph when applying 
			Algorithm~\ref{alg_reduction} on input the boolean formula $(x_1 \vee \neg 
			x_2) \wedge (x_2 \vee \neg x_3) \wedge (\neg x_1 \vee \neg x_2 \vee \neg 
			x_3)$. For the sake of a good visualization some nodes have been ``cloned'', 
			however, they actually represent a single node in the graph. Such ``cloned'' 
			nodes can be easily identified as the ones with dashed incident edges.}
	\end{figure}
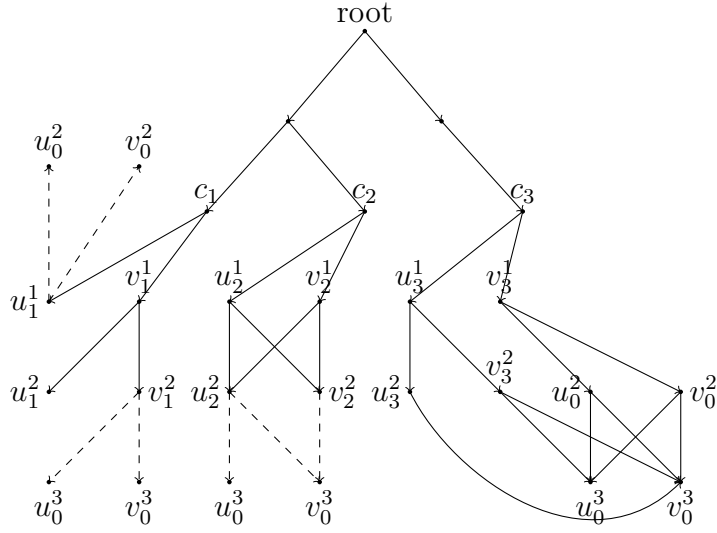
	
	\begin{lemma}\label{lemma_np_1}
		The longest walk in $G$ starting from the \emph{root} vertex has length 
		$n+\lceil \log m \rceil$ and ends either at $u_0^n$ or $v_0^n$.
	\end{lemma}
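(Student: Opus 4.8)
The plan is to impose a \emph{level} on the vertices of $G$ and show that every edge raises it by exactly one, which immediately controls the length of any walk from the \emph{root}.

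First I would assign levels as follows: a vertex of the binary tree $T$ at depth $d$ (so the \emph{root} has $d=0$ and, since all leaves are at the same depth, each leaf $c_i$ has $d=\lceil\log m\rceil$) gets level $d$, and each vertex $u_i^j$, $v_i^j$, $u_0^j$, $v_0^j$ gets level $\lceil\log m\rceil+j$. I then run through the edge sets produced by Algorithm~\ref{alg_reduction} — the tree edges, the edges $(c_i,u_i^1)$ and $(c_i,v_i^1)$, the edges added in the \textbf{for} loop over $j$ (both those reaching the level-$0$ vertices $u_0^{j+1},v_0^{j+1}$ and those staying in the $i$-track $u_i^{j+1},v_i^{j+1}$, including the ones created at $j=n-1$ that survive the deletion of $u_i^n,v_i^n$), the edges of $G'$, and the two final-step edge sets $(u_i^{n-1},u_0^n),(v_i^{n-1},u_0^n)$ and $(u_i^{n-1},v_0^n),(v_i^{n-1},v_0^n)$ — and check in each case that the head has level one more than the tail. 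Consequently $G$ is acyclic, the level strictly increases along every walk, and a walk from the \emph{root} reaching a vertex of level $\ell$ has length exactly $\ell$. Since the largest level occurring in $G$ is $\lceil\log m\rceil+n$ and the only vertices attaining it are $u_0^n$ and $v_0^n$ (every $u_i^n$ and $v_i^n$ having been removed), every walk from the \emph{root} has length at most $n+\lceil\log m\rceil$, and a walk of exactly that length must end at $u_0^n$ or $v_0^n$.

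It then remains to exhibit one walk of length $n+\lceil\log m\rceil$. Since the SAT instance contains no empty clause, $c_i(x)$ mentions some variable; let $k$ be the smallest index with $x_k\in c_i(x)$ or $\neg x_k\in c_i(x)$. Starting at the \emph{root}, descend through $T$ to $c_i$ (using $\lceil\log m\rceil$ edges) and step to $u_i^1$. For every $j<k$ neither $x_j$ nor $\neg x_j$ occurs in $c_i(x)$, so by construction both $u_i^j$ and $v_i^j$ have edges to both $u_i^{j+1}$ and $v_i^{j+1}$; hence one can march along the $i$-track and arrive at whichever of $u_i^k,v_i^k$ has an edge into the level-$0$ track — namely $u_i^k$ if $x_k\in c_i(x)$ and $v_i^k$ if $\neg x_k\in c_i(x)$ (and if $k=n$, use $u_i^{n-1}$ or $v_i^{n-1}$ together with the final-step edge into $u_0^n$ or $v_0^n$). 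From that escaping vertex, move into $\{u_0^{k+1},v_0^{k+1}\}$ and follow the chain of $G'$ up to $u_0^n$ or $v_0^n$. Counting edges gives $\lceil\log m\rceil+1+(n-1)=n+\lceil\log m\rceil$.

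The step I expect to be the main obstacle is this last construction: the out-edges of $u_i^j$ and $v_i^j$ depend on whether $x_j$, $\neg x_j$, or neither lies in $c_i(x)$, and when the matching literal is present exactly one of the pair "escapes" to the level-$0$ track. So the delicate point is to guarantee that, passing through the "don't-care" levels below $k$, one can always reach the correct one of $u_i^k,v_i^k$; I would settle this with a one-line induction on the level showing that from $c_i$ both $i$-track vertices of every level up to $\lceil\log m\rceil+k$ are reachable. By comparison, the level-function bookkeeping in the first part is routine, though it must be carried out carefully for the row $j=n-1$ and for the edges that remain after $u_i^n,v_i^n$ are deleted.
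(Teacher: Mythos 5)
Your proof is correct and rests on the same structural facts as the paper's own argument: the graph produced by Algorithm~\ref{alg_reduction} is layered, so the length of a walk from the \emph{root} is determined by where it ends, and any non-empty, non-tautological clause yields a walk that escapes to the $u_0,v_0$ chain and hence reaches $u_0^n$ or $v_0^n$. The paper presents the first part as a classification of the vertices without out-going edges (maximal walks can only end at $u_0^n,v_0^n$ or at some $u_i^{n-1},v_i^{n-1}$) and dismisses existence in one sentence, whereas you make the grading explicit via a level function and construct the full-length walk in detail; this is the same approach, only more explicit.
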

	
	\begin{proof}
		
		Let $w = w_0...w_k$ be a walk in $G = (V, E)$ starting from the \emph{root} 
		vertex. Let us assume that $w$ is maximal in the sense that $w_k$ does not 
		have out-going edges. According to Algorithm~\ref{alg_reduction}, $w_k$ does 
		not have out-going edges only if $w_k \in \{u_0^{n},v_0^{n}\}$ or $w_k \in 
		\{u_1^{n-1}, v_1^{n-1}, u_2^{n-1}, v_2^{n-1}, ..., u_m^{n-1}, v_m^{n-1}\}$ (see Step~\ref{step1} of 
		Algorithm~\ref{alg_reduction}). Therefore, either the longest walk ends at 
		$u_0^n$ or $v_0^n$ and its length is $n+\lceil \log m \rceil$ or its length is 
		$n-1+\lceil \log m \rceil$. The proof concludes by remarking that there must 
		exist at least one walk ending at $u_0^n$ or $v_0^n$ unless all the clauses 
		are empty, which is a tautology not-considered in SAT.\qed
	\end{proof}
	
	\begin{lemma}\label{lemma_np_2}
		Let $s = s_0s_1...s_{n+\lceil \log m \rceil}$ and $t = t_0t_1...t_{n+\lceil 
			\log m \rceil}$ two different maximal length walks in $G$ that start from the 
		\emph{root} vertex. Then, $\forall k \in \{0,...,n+\lceil \log m \rceil\} 
		(\ell_V(s_k) = \ell_V(t_k)) \Rightarrow \exists i\neq j (c_i \in s \wedge c_j 
		\in t)$.
	\end{lemma}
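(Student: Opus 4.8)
The plan is to prove the contrapositive: if $s$ and $t$ are maximal-length walks from the \emph{root} that visit the \emph{same} leaf $c_i$, and $\ell_V(s_k)=\ell_V(t_k)$ for every $k$, then in fact $s=t$. This is enough: by Lemma~\ref{lemma_np_1} a maximal-length walk descends monotonically through the tree $T$ to a leaf at depth $\lceil\log m\rceil$ and afterwards never re-enters $T$ (no edge of $G$ points back into $T$), so it visits exactly one leaf; hence ``$\exists i\neq j\,(c_i\in s\wedge c_j\in t)$'' simply says that $s$ and $t$ visit \emph{different} leaves, and its negation together with $s\neq t$ is exactly the hypothesis ``same leaf, same label sequence'' of the contrapositive.

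First I would fix the shape of a maximal-length walk: by Lemma~\ref{lemma_np_1} it has $n+\lceil\log m\rceil+1$ vertices, its first $\lceil\log m\rceil+1$ form the unique root-to-$c_i$ path in $T$, and the rest picks one vertex of $\{u_i^j,v_i^j\}$ per layer until a single ``escape'' edge into the central gadget $G'$, after which it picks one vertex of $\{u_0^\ell,v_0^\ell\}$ per layer and ends at $u_0^n$ or $v_0^n$. Because $T$ is a tree, two maximal-length walks visiting the same leaf already agree on the initial segment; in particular $s_{\lceil\log m\rceil}=t_{\lceil\log m\rceil}=c_i$, which is the base case.

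The heart of the argument is then an induction on $k$ from $\lceil\log m\rceil$ to $n+\lceil\log m\rceil-1$ showing $s_k=t_k\Rightarrow s_{k+1}=t_{k+1}$. Put $w:=s_k=t_k$; since $k<n+\lceil\log m\rceil$ it has an out-edge, so $w$ is neither $u_0^n$ nor $v_0^n$, and since $k\geq\lceil\log m\rceil$ it is $c_i$ or one of $u_i^j,v_i^j$ ($1\leq j\leq n-1$), $u_0^\ell,v_0^\ell$ ($2\leq\ell\leq n-1$). The decisive structural fact, read straight off Algorithm~\ref{alg_reduction}, is that each such $w$ has at most one out-neighbour with label $1$ and at most one with label $0$: in every case the out-edges of $w$ go only to a subset of a pair consisting of one ``$u$''-type vertex (label $1$) and one ``$v$''-type vertex (label $0$) of the next layer. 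Given this, $\ell_V(s_{k+1})=\ell_V(t_{k+1})$ leaves a unique admissible out-neighbour of $w$, so $s_{k+1}=t_{k+1}$; iterating yields $s=t$.

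The one spot where I expect genuine care to be needed is the layer-$(n-1)$ bookkeeping in Algorithm~\ref{alg_reduction}: $u_i^{n-1}$ and $v_i^{n-1}$ are first wired to layer $n$ inside the main loop, then $u_i^n,v_i^n$ are deleted with their incident edges, and finally edges to $u_0^n$ or to $v_0^n$ are re-added according to whether $x_n$ or $\neg x_n$ lies in $c_i$. One must check that after all this each of $u_i^{n-1},v_i^{n-1}$ still sends its out-edges into a single ``$u$''/``$v$'' pair of layer $n$; here it matters that $c_i$ is not a tautology, so $x_n$ and $\neg x_n$ are not both present. The remaining points (monotone descent through $T$, uniqueness of the root-to-$c_i$ path, and inspecting the out-neighbourhoods of the $G'$-vertices) are routine.
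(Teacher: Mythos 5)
Your proposal is correct and follows essentially the same route as the paper: both arguments boil down to the fact that, after the (unique) root-to-$c_i$ path in the tree, every vertex has at most one out-neighbour of each label, so the label sequence together with the leaf determines the whole walk, and hence two distinct equally-labelled maximal walks must pass through distinct leaves. Your write-up is in fact somewhat more careful than the paper's (which loosely asserts that all intermediate vertices are of the form $u_i^{\cdot}$ or $v_i^{\cdot}$, ignoring the central $u_0^{\cdot},v_0^{\cdot}$ vertices, and does not spell out the layer-$(n-1)$ rewiring), and your worry about tautologies at layer $n-1$ is harmless but unnecessary, since even with both $x_n$ and $\neg x_n$ present the out-neighbours would still form a single $\{u_0^n,v_0^n\}$ pair.
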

	
	\begin{proof}
		According to Algorithm~\ref{alg_reduction}, there exist $i,j \in \{1, ..., 
		m\}$ such that $c_i = s_{\lceil \log m \rceil}$ and $c_j = t_{\lceil \log m 
			\rceil}$. In addition, every vertex $s_k$ (resp. $t_k$) where $\lceil \log m 
		\rceil < k < n+\lceil \log m \rceil$ is either $u_i^{k-\lceil \log m \rceil}$ 
		(resp. $u_j^{k-\lceil \log m \rceil}$) or $v_i^{k-\lceil \log m \rceil}$ 
		(resp. $v_j^{k-\lceil \log m \rceil}$).
		
		Now, according to the vertex-label function $\ell_V$, if $\forall k \in 
		\{0,...,n-1+\lceil \log m \rceil\} (\ell_V(s_k) = \ell_V(t_k))$, then $\forall 
		k \in \{\lceil \log m \rceil+1, ..., n-1+\lceil \log m \rceil\} (s_k = 
		v_i^{k-\lceil \log m \rceil} \Leftrightarrow t_k = v_j^{k-\lceil \log m 
			\rceil})$. Similarly, if $\ell_V(s_{n+\lceil \log m \rceil}) = 
		\ell_V(t_{n+\lceil \log m \rceil})$ then $s_{n+\lceil \log m \rceil} = v_0^{n} 
		\Leftrightarrow t_{n+\lceil \log m \rceil} = v_0^{n}$ (see Lemma~\ref{lemma_np_1}). Therefore, $i = j 
		\Rightarrow s = t$, which is a contradiction. \qed
	\end{proof}
	
	\begin{theorem}\label{theo_np}
		The Binary MFS problem is NP-Hard.
	\end{theorem}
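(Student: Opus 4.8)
The plan is to show that Algorithm~\ref{alg_reduction} is a polynomial-time reduction from SAT to Binary MFS. Given a SAT instance with variables $x_1,\dots,x_n$ (we may assume $n\ge 2$, so that $G'$ is defined) and clauses $c_1,\dots,c_m$, the algorithm outputs in time polynomial in $m$ and $n$ a digraph $G$ with a binary vertex-labeling $\ell_V$ in which every vertex has out-degree at most $2$; hence, with $k:=n+\lceil\log m\rceil+1$, the triple $(G,\mathrm{root},k)$ is a legitimate instance of Binary MFS. I will prove that $f_{SAT}$ is satisfiable if and only if $\max_{t\in\{0,1\}^k} occ_{\mathrm{root}}^G(t)=m$. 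Since for a fixed $t$ the quantity $occ_{\mathrm{root}}^G(t)$ is computable in polynomial time by a dynamic program counting, for every prefix length and every vertex, the walks from $\mathrm{root}$ consistent with $t$, a polynomial-time algorithm for Binary MFS would recover the most frequent length-$k$ sequence, evaluate its number of occurrences, and compare it with $m$ --- thereby deciding SAT. This yields NP-hardness of Binary MFS, and a fortiori of MFS, even though neither need belong to $NP$~\cite{Garey:1979:CIG:578533}.

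The first step is to isolate the relevant walks via Lemmas~\ref{lemma_np_1} and~\ref{lemma_np_2}. By Lemma~\ref{lemma_np_1}, a walk from $\mathrm{root}$ with exactly $k$ vertices is a longest walk, so it descends the binary tree to some leaf $c_i$ and then runs through the clause/gadget layers to $u_0^n$ or $v_0^n$; because $\mathrm{root}$, every internal tree node and every leaf $c_i$ are labeled $1$, its label sequence has the form $1^{\lceil\log m\rceil+1}\,b$ with $b=b_1\cdots b_n\in\{0,1\}^n$. Consequently $occ_{\mathrm{root}}^G(t)=0$ whenever $t$ is not of this form, and Lemma~\ref{lemma_np_2} shows that two distinct length-$k$ walks with the same label sequence must traverse distinct leaves, whence $occ_{\mathrm{root}}^G(t)\le m$ for every $t$.

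The core of the argument is the identity $occ_{\mathrm{root}}^G\bigl(1^{\lceil\log m\rceil+1}b\bigr)=\lvert\{\,i : \text{the assignment }x_j\mapsto b_j\text{ satisfies }c_i\,\}\rvert$. To establish it I would fix a leaf $c_i$ and show that the (forced) length-$k$ walk from $c_i$ whose label suffix is $b$ exists precisely when $b$ satisfies $c_i$. The walk is forced because every vertex $w^j$ occurring along such a walk has at most two successors, and when it has two they carry distinct labels (a $u$-vertex labeled $1$ and a $v$-vertex labeled $0$), so $b$ pins down the next vertex. Inside the gadget the walk can never leave it and ends at $u_0^n$ or $v_0^n$; inside the clause layers, from $u_i^j$ the walk is pushed into the gadget exactly when $x_j\in c_i$ --- which forces $b_j=1$ --- and from $v_i^j$ exactly when $\neg x_j\in c_i$ --- which forces $b_j=0$ --- i.e.\ exactly when the literal of $c_i$ on $x_j$ is made true by $b$; otherwise it descends one more clause layer. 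So the walk escapes into the gadget, and hence reaches the required endpoint, iff $b$ satisfies a literal of $c_i$ on some variable $x_j$ with $j\le n-1$. The last layer, where $u_i^n,v_i^n$ are deleted and $G$ instead gets the edges $(u_i^{n-1},u_0^n),(v_i^{n-1},u_0^n)$ if $x_n\in c_i$ or $(u_i^{n-1},v_0^n),(v_i^{n-1},v_0^n)$ if $\neg x_n\in c_i$, I would treat separately: a walk still in the clause component at level $n-1$ completes to a length-$k$ walk realizing $1^{\lceil\log m\rceil+1}b$ exactly when $c_i$ is satisfied through $x_n$ (the re-added edge then forces the last label to the value $b_n$ already dictated by that literal), whereas a clause not satisfied at all either ends at a vertex whose forced successor carries the wrong label or, if $c_i$ has no literal on $x_n$, simply dead-ends at level $n-1$, contributing no length-$k$ walk. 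Summing over the leaves --- each reached from $\mathrm{root}$ by a unique tree path --- gives the identity, and combining it with the previous paragraph we obtain $\max_t occ_{\mathrm{root}}^G(t)=m$ iff some assignment satisfies every clause iff $f_{SAT}$ is satisfiable.

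The step I expect to be the main obstacle is making the ``the walk is forced'' claim watertight and nailing down the last layer: one must check that no length-$k$ walk realizing $1^{\lceil\log m\rceil+1}b$ can deviate from the trajectory described (uniqueness of the successor consistent with the next label, in particular at the out-degree-one vertices $u_i^{n-1},v_i^{n-1}$), that walks getting stuck before level $n$ never have $k$ vertices, and that the branching inside the binary tree is harmless since all of its vertices share the label $1$. Lemmas~\ref{lemma_np_1} and~\ref{lemma_np_2} already carry most of this structural burden, so what remains is essentially the boundary analysis at the last variable $x_n$.
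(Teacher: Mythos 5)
Your proposal is correct and follows essentially the same route as the paper's proof: the same reduction via Algorithm~\ref{alg_reduction}, the same reliance on Lemmas~\ref{lemma_np_1} and~\ref{lemma_np_2}, and the same correspondence between label suffixes and truth assignments, including the boundary treatment of $x_n$ through Steps~\ref{step_final1} and~\ref{step_final2}. Your packaging of the argument as the exact identity $occ_{\mathrm{root}}^G(1^{\lceil\log m\rceil+1}b)=\lvert\{i : b \text{ satisfies } c_i\}\rvert$, plus the explicit polynomial-time counting of occurrences to turn the search answer into a SAT decision, is just a slightly sharper formulation of the paper's two-directional proof.
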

	
	\begin{proof}
		Let $\prod$ be an instance of the SAT problem and let $G$ be the graph obtained by applying Algorithm~\ref{alg_reduction} on input $\prod$. Let $\prod'$ be the problem of 
		finding the most frequent sequence of length $n+1+\lceil \log m \rceil$ in $G$ 
		starting from the vertex root. Given a solution $s$ for $\prod'$, our aim is to prove that a true 
		assignment for $\prod$ exists (and can be found) in polynomial time if and 
		only if $s$ appears $m$ times in $G$. Doing so, $\prod$ is proven to be polynomially reducible to $\prod'$, which is a Binary MFS problem.
		
		First, let us assume that the most frequent sequence $s = s_0s_1...s_{n+\lceil 
			\log m \rceil}$ in $G$ occurs exactly $m$ times. Let $w_0w_1...w_{n+\lceil 
			\log m \rceil}$ be a walk such that $\forall i \in \{0,...,n+\lceil \log m 
		\rceil\} (\ell_V(w_i) = s_i)$. By Algorithm~\ref{alg_reduction}, there must 
		exist $i \in \{1, ..., m\}$ such that $w_{\lceil \log m \rceil} = c_i$. Let $w_{k+\lceil \log m \rceil}$ be the vertex such that $w_{k+\lceil \log m 
			\rceil} \notin \{u_0^{k}, v_0^{k}\}$ and $w_{k+1+\lceil \log m \rceil} \in 
		\{u_0^{k+1}, v_0^{k+1}\}$. Note that, such a vertex exists due to 
		Lemma~\ref{lemma_np_1} and Algorithm~\ref{alg_reduction}. According to such an algorithm, either 
		$w_{k+\lceil \log m \rceil} = u_i^{k}$ and $c_i(x)$ contains the literal 
		$x_{k}$ or $w_{k+\lceil \log m \rceil} = v_i^{k}$ and $c_i(x)$ contains the 
		literal $\neg x_{k}$. Therefore, if $w_{k+\lceil \log m \rceil} = u_i^{k}$ 
		then $x_{k} = s_{k+\lceil \log m \rceil} = \ell_V(u_i^{k}) = 1$ satisfies the 
		clause $c_i(x)$, otherwise $x_{k} = s_{k+\lceil \log m \rceil} = 
		\ell_V(v_i^{k}) = 0$ does. Consequently, the assignment $x_{j} = s_{j+\lceil \log m \rceil}\ 
		\forall j \in \{1, ..., n\}$ satisfies $c_i(x)$.
		
		Considering that $s$ appears $m$ times, then by Lemma~\ref{lemma_np_2} we can 
		conclude that all the clauses are satisfied by such assignment,  whereupon we 
		finish the first part of this proof.
		
		Now, let $x = (y_1,...,y_n)$  be a true assignment for $\prod$. Let us 
		consider the induced sub-graph $G_i$ formed by the vertex $c_i$ and all the 
		other vertices reachable from $c_i$. By design of Algorithm~\ref{alg_reduction}, there exists a walk $w = 
		w_0w_1...w_{n-1}$ in 
		$G_i$ such that $\forall k \in \{1,..., n-1\}(y_k = \ell_V(w_k))$ and $w_0 = 
		c_i$. In addition, if $y_j$ satisfies clause $c_i(x)$ (\emph{i.e.}, $y_j = 1 
		\wedge x_j \in 
		c_i(x)$ or $y_j = 0 \wedge \neg x_j \in c_i(x)$), then according to 
		Algorithm~\ref{alg_reduction} either ($x_j \in 
		c_i(x) \wedge \{(u_i^{j}, u_0^{j+1}), (u_i^{j}, v_0^{j+1})\} \in E)$ or $(\neg 
		x_j \in 
		c_i(x) \wedge \{(v_i^{j}, u_0^{j+1}), (v_i^{j}, v_0^{j+1})\} \in E)$. 
		Consequently, it must hold that $w_{n-1} \in \{u_0^{n-1}, 
		v_0^{n-1}\}$ if and only if $(x_1,...,x_{n-1}) = (y_1,...,y_{n-1})$ satisfies 
		$c_i(x)$. If $(y_1,...,y_{n-1})$ does satisfies $c_i(x)$, the walk $w = 
		w_0w_1...w_{n-1}w_n$ where $w_n = u_0^{n}$ if $y_n = 1$ and $w_n = v_0^{n}$ if $y_n = 0$ holds that $\forall k \in \{1,..., n\}(y_k = \ell_V(w_k))$. On the other hand, if $(y_1,...,y_{n-1})$ does not satisfies $c_i(x)$, then $y_n = 1 \Rightarrow x_n \in c_i(x)$ and $y_n = 0 \Rightarrow \neg x_n \in c_i(x)$. According to Steps~\ref{step_final1} and~\ref{step_final2} of Algorithm~\ref{alg_reduction}, $w = 
		w_0w_1...w_{n-1}w_n$ where $w_n = u_0^{n}$ if $y_n = 1$ and $w_n = v_0^{n}$ if $y_n = 0$ is a walk holding that $\forall k \in \{1,..., n\}(y_k = \ell_V(w_k))$. As a conclusion, for every $i \in \{1, ..., m\}$ there exists a walk passing through $c_i$ that generates the sequence $\underbrace{11...11}_{\lceil \log m \rceil+1}y_1...y_n$. This result together with Lemma~\ref{lemma_np_2} conclude that such a sequence repeats exactly $m$ times.\qed
	\end{proof}

	\begin{corollary}
		The MFS problem is NP-Hard.
	\end{corollary}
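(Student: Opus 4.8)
The plan is to observe that the Binary MFS problem is, by Definition~\ref{def_binary_mfs_problem}, merely the MFS problem with the input graph $G$ restricted to a binary vertex-label set ($\Sigma = \{0,1\}$) and to out-degree at most $2$ at every vertex. Consequently every instance $(G, v, k)$ of the Binary MFS problem is already a legal instance of the MFS problem, and the two problems agree on the output for that instance. This yields a trivial identity reduction from Binary MFS to MFS: one simply feeds the instance through unchanged.

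Made precise, I would argue by contradiction. Suppose the MFS problem admitted a polynomial-time algorithm $\mathcal{A}$. Given any Binary MFS instance $(G, v, k)$ --- which, by the previous observation, is a valid MFS input --- run $\mathcal{A}$ on $(G, v, k)$; it returns $\arg\max_{t \in \{0,1\}^{k}} occ_{v}^{G}(t)$, which is exactly the sought Binary MFS solution. Hence Binary MFS would be solvable in polynomial time, contradicting Theorem~\ref{theo_np}. Therefore MFS is NP-Hard.

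I expect essentially no obstacle. The only thing that would need checking is that the graph produced by Algorithm~\ref{alg_reduction} genuinely satisfies the Binary MFS restrictions (binary labels, out-degree $\le 2$), and this has already been established in the course of proving Theorem~\ref{theo_np}. The corollary is thus an immediate consequence of the general principle that if a special case of a problem is NP-Hard, so is the problem itself.
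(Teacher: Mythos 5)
Your proposal is correct and matches the paper's (implicit) argument: the corollary follows immediately from Theorem~\ref{theo_np} because Binary MFS is by definition a restriction of MFS, so the identity reduction suffices. Nothing further is needed.
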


	\section{Distance fraud analysis for the tree-based approach}\label{sec_algorithm}
	
	In this section, the problem of computing the distance fraud resistance of the tree-based DB protocol~\cite{DBLP:conf/isw/AvoineT09} is addressed. A naive algorithm to solve this problem consists on analyzing all the labeling functions for a full binary tree of depth $n$ and then computing the most frequent sequence for each labeling function (see Definition~\ref{def_distance_fraud}). This results in a time-complexity of $O(2^{2^n+n})$, which is unfeasible even for small values of $n$. We propose an algorithm that reduces this time-complexity to $O(2^{2n}n)$. Although still exponential, it might be used up to reasonable values of $n$ (\emph{e.g.}, $n = 32$).
	
	For the sake of clarity, we first adapt Definition~\ref{def_distance_fraud} to the context of the tree-based proposal.
	
	\begin{problem}[Tree-based distance fraud problem]
		\label{def_tree_problem}
		Let $\prod$ be a tree-based DB protocol with $n$ time-critical sessions that uses the full binary tree $T = (V, E)$ and $root \in V$ as the starting vertex. Let $M_{n}$ be a random variable on the sample space of all labeling functions $\ell: V \rightarrow \{0,1\}$ that outputs the maximum value $\max(occ_{ root}^T(t))$ where $t \in \{0,1\}^{n+1}$. The \emph{tree-based distance fraud problem} consists on finding the expectation of the random variable $M_{n}$.
	\end{problem}
	
	\begin{theorem}\label{theo_computation}
		Let $m$ and $n$ be two positive integers. Let $T^m_n$ be a tree such that: i) the root has $2m$ children and ii) each root's children is the root of a full binary tree of depth $n-1$. Let $M^m_n$ be the random variable on the sample space of all binary vertex-labeling functions over $T^m_n$ that outputs the maximum value $\max(occ_{ root}^{T^m_n}(t))$. The expectation of the random variable $M_{n}$ can be computed as follows:
		
		\begin{equation*}
		E(M_n) = E(M^1_n) = \sum_{i = 1}^{2^n}\left( \Pr(M^1_n < i+1) -\Pr(M^1_n < i) \right)i.
		\end{equation*}
		where 
		\begin{align*}
		\Pr(M^m_n < x) = & \sum_{i = 0}^{2m}\frac{\binom{2m}{i}}{2^{2m}}\left(\Pr(M^{i}_{n-1} < x)\Pr(M^{2m-i}_{n-1} < x) \right).
		\end{align*}
		and 
		\begin{equation*}
		\Pr(M^m_n < x) = \left\{\begin{array}{ll}
		0  & \text{if} \quad x = 1\\
		0  & \text{if} \quad n = 1 \wedge m > x\\
		\frac{1}{2^{2m}}\left(
		\binom{2m}{m} + 2\sum\limits_{i = m+1}^{x-1}\binom{2m}{i}
		\right)  & \text{if} \quad n = 1 \wedge m \leq x \leq 2m\\
		1  & \text{if} \quad n = 1 \wedge 2m < x \\
		1  & \text{if} \quad m = 0\\
		\end{array}
		\right.
		\end{equation*}
		
	\end{theorem}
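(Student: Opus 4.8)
The plan is to establish the three displayed identities one after another, the first two being essentially bookkeeping and the third carrying all the combinatorial content. For the first identity: the full binary tree of depth $n$ used by the tree-based protocol is, by definition, a root with two children, each the root of a full binary tree of depth $n-1$, i.e. it \emph{is} $T^1_n$; hence $M_n$ and $M^1_n$ have the same law and $E(M_n)=E(M^1_n)$. For the passage to the c.d.f. form I would first note that a depth-$n$ full binary tree has exactly $2^n$ directed root-to-leaf walks on $n+1$ vertices, so $M^1_n$ is supported on $\{1,\dots,2^n\}$; then $\Pr(M^1_n=i)=\Pr(M^1_n<i+1)-\Pr(M^1_n<i)$ and $E(M^1_n)=\sum_{i=1}^{2^n} i\,\Pr(M^1_n=i)$, which is the claimed formula.

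The core is the recursion, which I would prove by conditioning on the labels of the root's children. Write the root of $T^m_n$ with its uniform label and children $c_1,\dots,c_{2m}$, each $c_j$ the root of an independently uniformly labeled full binary tree $T_j$ of depth $n-1$. A walk on $n+1$ vertices from the root is $root,c_j,\dots$, so $occ^{T^m_n}_{root}(t_0t_1\dots t_n)=0$ unless $t_0=\ell(root)$, in which case it equals $\sum_{j} occ^{T_j}_{c_j}(t_1\dots t_n)$, each summand vanishing unless $t_1=\ell(c_j)$. Let $I=\#\{j:\ell(c_j)=0\}\sim\mathrm{Bin}(2m,\tfrac12)$. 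Maximizing over $t$ and splitting on the forced value of $t_1$,
\[
M^m_n=\max\!\Bigl(\,\max_{t_2\dots t_n}\sum_{j:\ \ell(c_j)=0} occ^{T_j}_{c_j}(0\,t_2\dots t_n)\;,\ \ \max_{t_2\dots t_n}\sum_{j:\ \ell(c_j)=1} occ^{T_j}_{c_j}(1\,t_2\dots t_n)\Bigr).
\]
Each $c_j$ labeled $0$ has two children, each the root of a full binary tree of depth $n-2$, and since $t_0,t_1$ are already consumed, $occ^{T_j}_{c_j}(0\,t_2\dots t_n)=\sum_{g\ \text{child of}\ c_j} occ^{T_g}_g(t_2\dots t_n)$; summing over the $i$ such $c_j$ collects $2i$ independently uniformly labeled depth-$(n-2)$ trees under a single (label-irrelevant) node, which is precisely the structure $T^i_{n-1}$. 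Hence, conditionally on $I=i$, the first bracketed term is distributed as $M^i_{n-1}$ and the second as $M^{2m-i}_{n-1}$, and they are independent (disjoint families of independently labeled subtrees); so $\Pr(M^m_n<x\mid I=i)=\Pr(M^i_{n-1}<x)\Pr(M^{2m-i}_{n-1}<x)$, and averaging over $i$ with $\Pr(I=i)=\binom{2m}{i}/2^{2m}$ yields the recursion. The extreme indices $i\in\{0,2m\}$ are consistent, since an empty sum maximizes to $0$ and $M^0_{n-1}\equiv 0$.

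It then remains to verify the base cases directly. If $m=0$, $T^0_n$ is a single vertex admitting no walk on $n+1\ge 2$ vertices, so $M^0_n\equiv 0$ and $\Pr(M^0_n<x)=1$. If $n=1$, $T^m_1$ is a root with $2m$ leaf children, so $occ^{T^m_1}_{root}(t_0t_1)=\#\{j:\ell(c_j)=t_1\}$ when $t_0=\ell(root)$, whence $M^m_1=\max(I,2m-I)$ with $I\sim\mathrm{Bin}(2m,\tfrac12)$; computing $\Pr(\max(I,2m-I)<x)=\Pr(2m-x<I<x)$ and folding the binomial coefficients via $\binom{2m}{i}=\binom{2m}{2m-i}$ produces the stated closed forms on each range of $x$.

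I expect the main obstacle to be the heart of the recursion: seeing that, once the two forced leading symbols are stripped from the $t_1=0$ side, the $2i$ grandchild subtrees literally reassemble into an instance of $T^i_{n-1}$, so that the corresponding branch-maximum has exactly the law of $M^i_{n-1}$; and arguing cleanly that the global maximum decomposes as the maximum of the two branch-maxima (because $t_1$ selects which side can contribute) and that those two branch-maxima are conditionally independent given $I$. The c.d.f.-to-expectation step and the binomial base-case computation are routine by comparison.
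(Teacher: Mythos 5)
Your proof is correct and follows essentially the same route as the paper: condition on the number $i$ of root children labeled $0$ (binomial), reassemble the corresponding grandchild subtrees into instances of $T^i_{n-1}$ and $T^{2m-i}_{n-1}$ whose branch-maxima are independent given $i$, handle $n=1$ via $\max(I,2m-I)$ with $I\sim\mathrm{Bin}(2m,\tfrac12)$, and finish with the standard discrete-expectation identity. Your write-up is in fact somewhat more explicit than the paper's about why the global maximum splits on the forced value of $t_1$ and why the two branch-maxima are conditionally independent.
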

	
	\begin{proof}
		A full binary tree of depth $n$ can be denoted as $T^1_n$ and the random variable $M_n$ is equivalent to $M^1_n$. Therefore, in what follows, we focus on computing the expectation of the random variable $M_{n}^m$.

		Let us consider now a labeling function $\ell$ over $T^m_n$. Let $V_0$ and $V_1$ be the set of children of the $T^m_n$'s root labeled with $0$ and $1$ respectively. Let $C^0_{1}, C^0_{2}, ..., C^0_{2|V_0|}$ and $C^1_{1}, C^1_{2}, ..., 
		C^1_{2|V_1|}$
		be the subtrees rooted in the children of the vertices in $V_0$ and $V_1$ respectively. Let $X$ be a labeled 
		tree whose root is labeled with $0$ and the root's children are the full 
		binary trees $C^0_{1}, C^0_{2}, ..., C^0_{2|V_0|}$. In the same vein, 
		$Y$ is defined as a labeled tree whose root is labeled with 
		$1$ and the root's children are the full binary trees $C^1_{1}, C^1_{2}, ..., 
		C^1_{2|V_1|}$. It can be noted that, if a sequence $t = t_1...t_n$ occurs 
		exactly $k$ times either in $X$ or $Y$, then the sequence $t = t_0t_1...t_n$ where $t_0$ is the label of $T_n^m$'s root also appears exactly $k$ times in $T^m_n$. Therefore, taking into account that $X = T^{|V_0|}_{n-1}$ and $Y = T^{|V_1|}_{n-1}$, the following recurrent result can be obtained:

		\begin{align}\label{eq_tree_1}
		\Pr(M^m_n < x) = & \sum_{i = 0}^{2m}\Pr(|V_0| = i)\left(\Pr(M^{i}_{n-1} < x)\Pr(M^{2m-i}_{n-1} < x) \right)\notag \\
		= & \sum_{i = 0}^{2m}\frac{\binom{2m}{i}}{2^{2m}}\left(\Pr(M^{i}_{n-1} < x)\Pr(M^{2m-i}_{n-1} < x) \right).
		\end{align}
		
		Equation~\ref{eq_tree_1} shows that $\Pr(M^m_n < x)$ could be computed recursively. To do so, stop conditions must be found as follows:
		
		\begin{equation}\label{eq_tree_2}
			\Pr(M^m_n < x) = \left\{\begin{array}{ll}
				0  & \text{if} \quad x = 1\\
				0  & \text{if} \quad n = 1 \wedge m > x\\
				\frac{1}{2^{2m}}\left(
				\binom{2m}{m} + 2\sum\limits_{i = m+1}^{x-1}\binom{2m}{i}
				\right)  & \text{if} \quad n = 1 \wedge m \leq x \leq 2m\\
				1  & \text{if} \quad n = 1 \wedge 2m < x \\
				1  & \text{if} \quad m = 0\\
			\end{array}
			\right.
		\end{equation}
		
		Let us analyze $\Pr(M^m_1 < x)$, which is the less trivial stop condition in Equation~\ref{eq_tree_2}. Since $T_1^m$ has depth $1$ and $2m$ children,  $T_1^m$ generates $2m$ sequences, $p$ of them ending with $0$ and $q$ with $1$. Consequently, $M^m_1 = \max(p, q) \geq m$ and thus, $\Pr(M^m_1 < x) = 0$ if $x < m$. Similarly $M^m_1 \leq 2m$, which implies that $\Pr(M^m_1 < x) = 1$ if $x > 2m$. Finally, let us assume that $m \geq x \geq 2m$. In this case, $M^m_1 < x$ holds if $M^m_1 \in \{m, m+1, ..., x-1\}$, therefore, $\Pr(M^m_1 < x) = \sum\limits_{i = m}^{x-1}\Pr(M^m_1 = i)$ where $\Pr(M^m_1 = i) = \frac{\binom{2m}{m}}{2^{2m}}$ if $i = m$, otherwise  $\Pr(M^m_1 = i) = 2\frac{\binom{2m}{i}}{2^{2m}}$. This yields to $\Pr(M^m_1 < x) = \frac{1}{2^{2m}}\left(
		\binom{2m}{m} + 2\sum\limits_{i = m+1}^{x-1}\binom{2m}{i}
		\right)$ if $m \geq x \geq 2m$.
		
		The proof concludes by using the definition of expectation for a discrete variable together with Equations~\ref{eq_tree_1} and~\ref{eq_tree_2}.\qed
		
	\end{proof}
	
	\noindent \textbf{Time-complexity analysis.}	The result provided by Theorem~\ref{theo_computation} can be implemented by a dynamic algorithm, meaning that a three-dimensional matrix will dynamically store the values of $\Pr(M^m_n < x)$ and will use them when needed. In the worst case, the algorithm requires to fill the whole matrix, which results in a time-complexity of $O(2^n\times n \times 2^n) = O(2^{2n}n)$.

	\section{Discussion and Conclusions}\label{sec_conclusions}
	
	Before the introduction of graph-based DB protocols, computing resistance 
	to distance fraud was not a big issue (\emph{e.g.}, Hancke and 
	Kuhn~\cite{Hancke:2005:RDB:1128018.1128472} and Kim and 
	Avoine~\cite{KimA-2011-ieeetwc} proposals.) Actually, the well-known 
	early-reply strategy used to analyze distance fraud security implicitly 
	assumes that the adversary is able to compute the \emph{best} answer 
	without knowing the challenges and within a ``reasonable'' time frame. In 
	this article, however, we have shown that this assumption might not hold 
	for graph-based DB protocols by proving that the Binary MFS problem is 
	NP-Hard. This opens two interesting research questions: i) What instances 
	of the Binary MFS problem are actually hard to solve? ii) In a practical 
	setting, does the adversary have enough time to solve a probably 
	exponential problem between the end of the slow phase and the beginning of 
	the fast phase? iii) What kinds of heuristics can be used and what would 
	the implications be?
	
	Even though we do not give answers to those questions, we provide a clue of how to build graph-based DB protocols resistant to distance fraud. As indicate our reduction from the SAT problem, a good strategy is to label the vertices of $G$ as follows: if the vertices $u$ and $v$ have incident edges from the same vertex, then $\forall b \in \{0, 1\} (\ell_V(u) = b  \Leftrightarrow \ell_V(v) = b \oplus 1)$. Doing so, $G$ is likely to generate all the sequences $\{0,1\}^n$ just once, in which case its resistance to distance fraud achieves the lower  bound $1/2^n$. As a consequence, we conjecture that the best graph DB protocol in terms of mafia fraud constrained to have no more than certain number of nodes, is also the best in terms of distance fraud. Note that, this conjecture becomes trivial if no limit on the size of the graph is considered.
	
	This article has also addressed the problem of computing the distance fraud security of the tree-based DB proposal~\cite{DBLP:conf/isw/AvoineT09}. This is an inherent exponential problem since a graph with $N$ nodes can be labeled in $2^N$ different ways. The tree-based proposal uses a tree with $2^n$ nodes and thus $2^{2^n}$ labelling functions exists. However, we provide an algorithm that avoids considering all the labelling functions and has a time complexity of $O(2^{2n}n)$, which is significantly better than the naive approach with $O(2^{2^n+n})$. This result makes realistic the challenge of computing the distance fraud security of the two graph-based DB protocols proposed up-to-date; the tree-based approach~\cite{DBLP:conf/isw/AvoineT09} by using the proposed algorithm ($O(2^{2n}n)$), and the Poulidor protocol~\cite{Trujillo-Rasua:2010:PDP:1926325.1926352} by simply using a brute-force algorithm ($O(2^{3n})$). Doing so, both can be fairly compared with other state-of-the-art DB protocols. Such a challenge is out of the scope of this article and is left as future work, though.

\subsubsection*{Acknowledgments.} The author thanks to Gildas Avoine, Sjouke 
Mauw, Juan Alberto 
Rodriguez-Velazquez, and Alejandro Estrada-Moreno for their invaluable 
comments and feedback.


\begin{thebibliography}{10}
	
	\bibitem{DBLP:conf/icde/AgrawalS95}
	R. Agrawal and R. Srikant.
	\newblock Mining sequential patterns.
	\newblock In {\em ICDE}, pages 3--14, 1995.
	
	\bibitem{AvoineBKLM-2011-jcs}
	G. Avoine, M.~A. Bing\"ol, S. Karda\c{s}, C.
	Lauradoux, and B. Martin.
	\newblock {A Framework for Analyzing RFID Distance Bounding Protocols}.
	\newblock {\em Journal of Computer Security}, 19(2):289--317, March 2011.
	
	\bibitem{DBLP:conf/isw/AvoineT09}
	G. Avoine and A. Tchamkerten.
	\newblock An efficient distance bounding rfid authentication protocol:
	Balancing false-acceptance rate and memory requirement.
	\newblock ISC, pages 250--261, 2009.
	
	\bibitem{Brands:1994:DP:188307.188361}
	S. Brands and D. Chaum.
	\newblock Distance-bounding protocols.
	\newblock EUROCRYPT '93, pages 344--359, 1994.
	
	\bibitem{Campagna:2010:FFP:1933307.1934549}
	A. Campagna and R. Pagh.
	\newblock On finding frequent patterns in event sequences.
	\newblock ICDM '10, pages 755--760, 2010.
	
	\bibitem{mining_survey}
	C. Chand, A. Thakkar, and G. Amit.
	\newblock Sequential pattern mining: Survey and current research challenges.
	\newblock {\em International Journal of Soft Computing and Engineering}, 2(1),
	2012.
	
	\bibitem{citeulike:1223195}
	J.~H. Conway.
	\newblock {\em {On Numbers and Games}}.
	\newblock AK Peters, Ltd., 2nd edition, December 2000.
	
	\bibitem{Desmedt:1987:SUS:646752.704723}
	Y. Desmedt, C. Goutier, and S. Bengio.
	\newblock Special uses and abuses of the fiat-shamir passport protocol.
	\newblock CRYPTO '87, pages 21--39, 1988.
	
	\bibitem{DrimerM-2007-usenix}
	S. Drimer and S.~J. Murdoch.
	\newblock Keep your enemies close: distance bounding against smartcard relay
	attacks.
	\newblock USINEX, pages 1--16, 2007.
	
	\bibitem{Francis:2010:PNP:1926325.1926331}
	L. Francis, G.~P. Hancke, K. Mayes, and K. Markantonakis.
	\newblock Practical {NFC} peer-to-peer relay attack using mobile phones.
	\newblock RFIDSec'10, pages 35--49, 2010.
	
	\bibitem{Garey:1979:CIG:578533}
	M.~R. Garey and D.~S. Johnson.
	\newblock {\em Computers and Intractability: A Guide to the Theory of
		NP-Completeness}.
	\newblock New York, NY, USA, 1979.
	
	\bibitem{Hancke:2005:RDB:1128018.1128472}
	G.~P. Hancke and M.~G. Kuhn.
	\newblock An {RFID} distance bounding protocol.
	\newblock SECURECOMM, pages 67--73, 2005.
	
	\bibitem{Hancke:2008:ATD:1352533.1352566}
	G.~P. Hancke and M.~G. Kuhn.
	\newblock Attacks on time-of-flight distance bounding channels.
	\newblock WiSec '08, pages 194--202, 2008.
	
	\bibitem{KimA-2011-ieeetwc}
	C.~H. Kim and G. Avoine.
	\newblock {RFID Distance Bounding Protocols with Mixed Challenges}.
	\newblock {\em IEEE Transactions on Wireless Communications}, 10(5):1618--1626,
	2011.
	
	\bibitem{Mannila:1997:DFE:593416.593449}
	H. Mannila, H. Toivonen, and A.~I.~Verkamo.
	\newblock Discovery of frequent episodes in event sequences.
	\newblock {\em Data Min. Knowl. Discov.}, 1(3):259--289, January 1997.
	
	\bibitem{OrenW-2009-eprint}
	Y. Oren and A. Wool.
	\newblock Relay attacks on {RFID}-based electronic voting systems.
	\newblock Cryptology ePrint Archive, Report 2009/422, 2009.
	
	\bibitem{enumerating}
	M. Shimizu, H. Nagamochi, and T. Akutsu.
	\newblock Enumerating tree-like chemical graphs with given upper and lower
	bounds on path frequencies.
	\newblock {\em BMC Bioinformatics}, 12:1--9, 2011.
	
	\bibitem{Akutsu20121416}
	T. Akutsu D. Tatsuya, D. Fukagawa, J. Jansson, and K. Sadakane.
	\newblock Inferring a graph from path frequency.
	\newblock {\em Discrete Applied Mathematics}, 160(10-11):1416--1428, 2012.
	
	\bibitem{Trujillo-Rasua:2010:PDP:1926325.1926352}
	R. Trujillo-Rasua, B. Martin, and G. Avoine.
	\newblock The {P}oulidor distance-bounding protocol.
	\newblock RFIDSec'10, pages 239--257, 2010.
	
	\bibitem{DBLP:conf/cans/KimA09}
	C.~H. Kim and G. Avoine.
	\newblock {RFID} Distance Bounding Protocol with Mixed Challenges to Prevent Relay Attacks.
	\newblock CANS, pages 119-133, 2009.

	\bibitem{SingeleeP-2007-esas}
	D. Singel\'{e}e and B.~H. Preneel.
	\newblock Distance Bounding in Noisy Environments.
	\newblock ESAS, pages 101--115, 2007.
	
	\bibitem{XinYTHC-2011-imccc}
	W. Xin, T. Yang, C. Tang, J. Hu and Z. Chen.
	\newblock A Distance Bounding Protocol Using Error State and Punishment.
	\newblock IMCCC, pages 436--440, 2011.
	
	\bibitem{Munilla:2008:DBP:1455665.1455675}
	J. Munilla and A. Peinado.
	\newblock Distance bounding protocols for {RFID} enhanced by using void-challenges and analysis in noisy channels.
	\newblock {\em Wirel. Commun. Mob. Comput.}, 8(9):1227--1232, 2008.
		
	
	\bibitem{Yang:2004:CMM:1014052.1014091}
	G. Yang.
	\newblock The complexity of mining maximal frequent itemsets and maximal
	frequent patterns.
	\newblock KDD '04, pages 344--353, 2004.
	
\end{thebibliography}
	\end{document}